\newtheorem{Def}{Definition}
 \newtheorem{Thm}{Theorem}
 \newtheorem{Lmm}{Lemma}
 \newtheorem{Crl}{Corollary}
 \theoremstyle{definition}
 \newtheorem{Exm}{Example}
 \theoremstyle{remark}
 \newtheorem{Rmk}{Remark}
\DeclareMathOperator*{\argmin}{arg\,min}
\DeclareMathOperator*{\essinf}{ess\,inf}
\DeclareMathOperator*{\esssup}{ess\,sup}
\newcommand{\E}{{E}}
\title{Optimal hedging with variational preferences under convex risk measures}
\author{Marcelo Righi	\\
	\textit{Federal University of Rio Grande do Sul}\\
	marcelo.righi@ufrgs.br\footnote{We are grateful for the financial support of CNPq (Brazilian Research Council) projects number 302614/2021-4 and 401720/2023-3.}}
\date{}
\begin{document}
\maketitle
\begin{abstract}
We expose a theoretical hedging optimization framework with variational preferences under convex risk measures. We explore a general dual representation for the composition between risk measures and utilities. We study the properties of the optimization problem as a convex and monotone map per se. We also derive results for optimality and indifference pricing conditions. We also explore particular examples inside our setup. 

\textbf{Keywords}:  Risk measures; Variational preferences; Optimal hedging; Indifference pricing.
\end{abstract}

\newpage

\section{Introduction}\label{sec:intro}

In this note, we expose a theoretical hedging optimization framework with variational preferences under convex risk measures. Option hedging is a main issue in the field of mathematical finance, tracing back
to the seminal papers of \cite{Black1973} and \cite{Merton1973}. 
Consider a market with $n\in\mathbb{N}$ (see formal mathematical definitions below) assets $S=(S_1,\dots,S_n)\subset L^p_+$. Each of these assets is traded at time $0$ under known prices. We refer to $\Delta S=(\Delta S_1,\dots,\Delta S_n)\subset L^p$ their (discounted) price variations. We then have some contingent claim $H\in L^p$ to be hedged, i.e. the agent sells $H$ and buys some portfolio $V$, ending up with the hedged position $V-H$. We fix our hedging cost to be a scalar $V_0>0$, and the financial positions generated with such setup as \[\mathcal{V}_{V_0}=\left\lbrace V=V_0+\sum_{i=1}^n h_i\Delta S_i\colon h\in\mathbb{R}^n\right\rbrace.\] 

We say that $\mathcal{V}_{V_0}$ is the unconstrained set of such trading
strategies. The completeness property, where any contingent claim is replicated, plays a key role. In this case, for any $H$, there is $V\in\mathcal{V}_{V_0}$ such that $V=H$. However, real features constraints can make replication unattainable. Such restrictions arise due to liquidity, asset availability, admissibility, or any other trading restrictions as costs. We then have to consider the problem over  $\mathcal{V}\subseteq\mathcal{V}_{V_0}$.

In such a context, agents must bear some risk exposure when
buying/selling contingent claims. More specifically, instead of replicating contingent claims, the goal is to maximize the expected utility for the hedged position. The key ingredient is the utility function $u$, where the usual approach is to solve \[\sup\limits_{V\in\mathcal{V}}E[u(V-H)].\]  Since this is a too complacent approach, especially under uncertainty, it is conventional to consider a robust decision setting where we maximize utility by considering a worst-case scenario over probabilities $\mathcal{Q}$ as \[\sup\limits_{V\in\mathcal{V}}\min\limits_{\mathbb{Q}\in\mathcal{Q}}E_\mathbb{Q}[u(V-H)].\] However, in several situations, using worst-case approaches is too strict for real-world applications. For instance, the agent
may end up with a trivial strategy if the
problem is too punitive.

\cite{Maccheroni2006} then suggest incorporating a penalty term into the objective function, thereby placing a weighting on scenarios that are regarded as more or less relevant. Such a middle-ground approach is related to variational preferences for $u$ under a convex and lower semicontinuous penalty $\alpha\colon\mathcal{Q}\to[0,\infty]$. Thus, the problem then becomes \[\sup\limits_{V\in\mathcal{V}}\min\limits_{\mathbb{Q}\in\mathcal{Q}}\left\lbrace E_\mathbb{Q}[u(V-H)] +\alpha(\mathbb{Q})\right\rbrace.\] 
Both standard expected utility and worst-case approaches are nested in such a model depending on the choice for $\alpha$. See \cite{Hansen2008} for more details and applications. 

Such variational preferences can be represented as the composition of a convex risk measure and a utility function. The theory of risk measures in mathematical finance has become mainstream, especially since the landmark paper of \cite{Artzner1999}. For a comprehensive review, see the books of \cite{Delbaen2012} and \cite{Follmer2016}. A functional $\rho:L^p\rightarrow\mathbb{R}$ is a convex risk measure if it possesses Monotonicity: if $X \leq Y$, then $\rho(X) \geq \rho(Y),\:\forall\: X,Y\in L^p$; Translation Invariance: $\rho(X+c)=\rho(X)-c,\:\forall\: X\in L^p,\:\forall\:c \in \mathbb{R}$; Convexity: $\rho(\lambda X+(1-\lambda)Y)\leq \lambda \rho(X)+(1-\lambda)\rho(Y),\:\forall\: X,Y\in L^p,\:\forall\:\lambda\in[0,1]$. The acceptance set of $\rho$ is defined as $\mathcal{A}_\rho=\left\lbrace X\in L^p:\rho(X)\leq 0 \right\rbrace $. From Theorems 2.11 and 3.1 of \cite{Kaina2009}, $\rho$ is a convex risk measure if and only if it can be represented for any $X\in L^p$ as:
	\begin{equation*}
\rho(X)=\max\limits_{\mathbb{Q}\in\mathcal{Q}}\left\lbrace E_\mathbb{Q}[-X]-\alpha_\rho(\mathbb{Q})\right\rbrace,\:
\alpha_\rho(\mathbb{Q})=\sup\limits_{X\in L^p}\{E_\mathbb{Q}[-X]-\rho(X)\}=\sup\limits_{X\in\mathcal{A}_\rho}E_\mathbb{Q}[-X].
	\end{equation*} Thus, when $\alpha=\alpha_\rho$ for some convex risk measure $\rho$, to maximize the variational preference is equivalent to minimize $\rho_u:=\rho\circ u$ as \[\inf\limits_{V\in\mathcal{V}}\rho_u\left( V-H\right).\]

 We expose a theoretical hedging optimization framework with variational preferences under convex risk measures in this note. In \Cref{dual}, we explore a general dual representation for such compositions concerning penalty terms. Such results allow us, for instance, to compute its sub-differentials as in \Cref{crl:sub}. Further, in \Cref{prp:P}, we study the properties of the optimization problem as a convex and monotone map per se, including its penalty term. We also derive results for optimality based on the obtained sub-differentials, as exposed in \Cref{prp:solution}. The design of the studied optimization problem is related to indifference pricing in \Cref{prp:price}. Further conditions for indifference pricing in the context of the usual fundamental theorems of asset pricing are given in \Cref{thm:price}. We also explore particular examples inside our setup in \Cref{sec:exm}, with connection to usual concrete choices for both risk measure $\rho$ and utility $u$. 
 
 To the best of our knowledge, this is the first work to consider such a hedging problem in the way we do. For risk measures, the seminal quantile hedging approach of \cite{Follmer2000} has opened the stream for risk measures minimization of the hedged position. From there, the main focus is on the minimization of a convex risk measure over the hedged portfolio under some cost constraint, as in  \cite{Nakano2004}, \cite{Barrieu2005}, \cite{Xu2006}, \cite{Rudloff2007},  \cite{Cherny2007}, \cite{Ilhan2009}, \cite{Balbas2010},  \cite{Assa2013}, \cite{Godin2016}, \cite{Cheridito2017}, \cite{Buehler2019},   \cite{Wu2023}. Even non-convex risk measures have been considered, as in \cite{Cong2013} or \cite{Melnikov2022}. However, these papers focus on something other than the problem under a variational preference setup. \cite{Herrmann2017}, for instance, considers variational preferences but is not linked to risk measures.   \cite{Limmer2023} is the closer one, with risk measures and variational preferences considered. However, this work develops different features than we do here since their focus is more on the modeling for a solution.

\section{Main Results}\label{sec:results}

We work with real-valued random variables in a probability space $(\Omega,\mathcal{F},\mathbb{P})$. All equalities and inequalities are considered almost surely in $\mathbb{P}$. Let $L^{p}:=L^{p}(\Omega,\mathcal{F},\mathbb{P})$ be the space of (equivalent classes of) random variables such that  $ \lVert X \rVert_p = (E[|X|^p])^{\frac{1}{p}}<\infty$ for $p\in[1,\infty)$ where $E$ is the expectation operator. We define $L^p_+$ as the cone of non-negative elements of $L^p$. Let $1_A$ be the indicator function for an event  $A$. When not explicit,  definitions and claims are valid for any fixed $L^p,\:p\in[1,\infty)$ with its usual p-norm.  As usual, $L^q$, with $\frac{1}{p}+\frac{1}{q}=1$, is the usual dual of $L^p$. The weak topology on $L^p$, i.e. $\sigma(L^p,L^q)$, is the topology generated by the continuous linear functionals over $L^p$ with the form $X\mapsto E[XY]$ with $Y\in L^q$. Let $\mathcal{Q}$ be the set of all probability measures on $(\Omega,\mathcal{F})$ that are absolutely continuous with respect to $\mathbb{P}$, with Radon--Nikodym derivative $\frac{d\mathbb{Q}}{d\mathbb{P}}\in L^q$. With some abuse of notation, we treat probability measures as elements of $L^q$. 

In what follows, we fix $\rho$ to be a convex risk measure and  $u\colon\mathbb{R}\to\mathbb{R}$ to be a concave and monotone utility. We begin by deducing an explicit form for the dual representation of $\rho_\mu$ based on its penalty term. It is worth noting that without Translation Invariance, the claim for dual representation presented in \Cref{sec:intro}  holds without the need for $E[\mathbb{Q}]=1$ in the proper domain of $\alpha_{\rho_u}$. With some abuse, we call a map monotone if it fulfills the usual monotonicity or the anti-monotonicity version for risk measures. We call $u^*$ the convex conjugate of $-u$.

\begin{Thm} \label{dual}
$\rho_u$ defines a convex, monotone, and continuous map that is normalized when $u(0)=0$. Its penalty term given as  \begin{equation}\label{eq:pen2}
\alpha_{\rho_u}(\mathbb{Q})=\min\limits_{Y\in\mathcal{Q}}\left\lbrace\alpha_\rho(Y)+E_Y[u^*(\mathbb{Q})]\right\rbrace,\:\forall\:\mathbb{Q}\in L^q.   \end{equation}
\end{Thm}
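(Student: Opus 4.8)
The plan is to treat the two assertions separately. First I would establish the structural properties by composition. Since a finite concave $u$ on $\mathbb{R}$ satisfies $|u(t)|\le a+b|t|$, the Nemytskii map $X\mapsto u(X)$ sends $L^p$ into $L^p$, so $\rho_u=\rho\circ u$ is well defined and real valued. Convexity follows from the chain: $u$ concave gives $u(\lambda X+(1-\lambda)Y)\ge\lambda u(X)+(1-\lambda)u(Y)$ pointwise, and then the antitone monotonicity and convexity of $\rho$ give $\rho_u(\lambda X+(1-\lambda)Y)\le\lambda\rho_u(X)+(1-\lambda)\rho_u(Y)$. Monotonicity is immediate from $X\le Y\Rightarrow u(X)\le u(Y)\Rightarrow\rho(u(X))\ge\rho(u(Y))$. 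For continuity I would note that $X\mapsto u(X)$ is norm continuous on $L^p$ (locally Lipschitz, affine growth) and that a finite convex monotone functional on $L^p$ is norm continuous, as in \cite{Kaina2009}; hence the composition is continuous. Finally $\rho_u(0)=\rho(u(0))=\rho(0)$, which vanishes when $u(0)=0$ and $\rho$ is normalized.

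The core is the penalty formula, which I would derive as a conjugate computation. Writing $\alpha_{\rho_u}(\mathbb{Q})=\sup_{X\in L^p}\{E_\mathbb{Q}[-X]-\rho_u(X)\}$ and inserting the dual representation $\rho(u(X))=\sup_{Y\in\mathcal{Q}}\{E_Y[-u(X)]-\alpha_\rho(Y)\}$ recalled in Section 1 yields
\[
\alpha_{\rho_u}(\mathbb{Q})=\sup_{X\in L^p}\inf_{Y\in\mathcal{Q}}\left\{E_\mathbb{Q}[-X]+E_Y[u(X)]+\alpha_\rho(Y)\right\}.
\]
The bracketed objective is concave and upper semicontinuous in $X$ (affine in $X$ plus the concave term $E_Y[u(X)]$) and convex, weak-$*$ lower semicontinuous in $Y$ ($\alpha_\rho$ convex lsc, the coupling linear in $Y$), with $\mathcal{Q}$ convex, so I would invoke a minimax theorem of Sion type to interchange $\sup_X$ and $\inf_Y$, obtaining $\inf_{Y\in\mathcal{Q}}\{\alpha_\rho(Y)+\sup_{X\in L^p}(E_Y[u(X)]-E_\mathbb{Q}[X])\}$.

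The remaining inner supremum is an integral functional, so by decomposability of $L^p$ and Rockafellar's interchange theorem for normal integrands I would pull the supremum inside the expectation. With $y=\tfrac{dY}{d\mathbb{P}}$ and $q=\tfrac{d\mathbb{Q}}{d\mathbb{P}}$,
\[
\sup_{X\in L^p}E[\,y\,u(X)-q\,X\,]=E\Big[\sup_{t\in\mathbb{R}}\big(y\,u(t)-q\,t\big)\Big]=E\big[\,y\,u^*(-q/y)\big],
\]
where on $\{y>0\}$ one factors out $y$ and recognizes $\sup_t\{(-q/y)t+u(t)\}=u^*(-q/y)$, the conjugate of $-u$, while $\{y=0\}$ forces $q=0$ there for finiteness. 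Under the paper's density convention the right-hand side is exactly $E_Y[u^*(\mathbb{Q})]$, giving the claimed identity with $\inf$. Upgrading $\inf$ to $\min$ needs a separate attainment step: the map $Y\mapsto\alpha_\rho(Y)+E_Y[u^*(\mathbb{Q})]$ is convex and weak-$*$ lower semicontinuous (the penalty term is an integral of the perspective of $u^*$), so coercivity together with weak-$*$ compactness of its sublevel sets delivers a minimizer.

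I expect the minimax interchange to be the main obstacle, since the inner variable ranges over the noncompact space $L^p$ with only an affine coupling. I would secure it either by exhibiting weak-$*$ compactness of the relevant sublevel sets of $\alpha_\rho$ (using $\alpha_\rho(Y)=\sup_{X\in\mathcal{A}_\rho}E_Y[-X]$ and the weak-$*$ closedness of the acceptance set $\mathcal{A}_\rho$), or by bypassing Sion and proving the two inequalities directly, where the nontrivial ``$\le$'' direction is obtained by feeding near-optimal $Y$ back into the outer problem through perturbations $X$ that approximate the pointwise maximizers of $y\,u(t)-q\,t$, with decomposability guaranteeing these stay in $L^p$. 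By comparison, the interchange of supremum and integral and the attainment of the minimum are routine once the integrand is checked to be normal and coercive.
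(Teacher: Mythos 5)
Your overall route coincides with the paper's: both proofs obtain the penalty by inserting the dual representation of $\rho$ into the conjugate $\sup_{X\in L^p}\{E[-X\mathbb{Q}]-\rho_u(X)\}$, exchange $\sup_X$ with $\inf_Y$ by Sion's minimax theorem \citep{Sion1958}, and evaluate the inner supremum by the interchange theorem (Theorem 14.60 of \cite{Rockafellar2009}), which is exactly where decomposability of $L^p$ enters. The only cosmetic differences are the order of operations (the paper first computes the penalty $E_Y[u^*(\mathbb{Q})]$ of the auxiliary maps $G_Y(X)=E_Y[-u(X)]$ and then performs the minimax, while you do the minimax first and then the interchange) and your sign convention $u^*=(-u)^*$ versus the paper's $u^*(s)=\sup_t\{-st+u(t)\}$, which you reconcile correctly. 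The compactness obstacle you flag is resolved in the paper along precisely your first suggested line: the minimization is restricted to $\mathcal{Q}^\prime=\{\mathbb{Q}\in L^q\colon\alpha_\rho(\mathbb{Q})<\infty\}$, asserted to be weakly compact (drawing on \cite{Kaina2009}), which simultaneously legitimizes Sion's theorem and yields attainment of the minimum; your separate coercivity/lower-semicontinuity step for attainment is thereby subsumed.

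The one genuine error is your opening claim that a finite concave $u$ on $\mathbb{R}$ satisfies $|u(t)|\le a+b|t|$, from which you conclude that $X\mapsto u(X)$ maps $L^p$ into $L^p$. Concavity gives only the one-sided affine bound $u(t)\le a+b|t|$; the lower tail of a finite concave function can decay arbitrarily fast. The paper's own running example, the exponential utility $u(x)=-e^{-ax}+1$, is monotone, concave and finite, yet $u(X)\notin L^p$ for many $X\in L^p$ (take $X\in L^1$ with $E[e^{-aX}]=\infty$). So well-definedness of $\rho_u$ on all of $L^p$ cannot be derived this way --- and indeed the paper does not derive it either: it implicitly assumes $\rho_u$ is a real-valued map on $L^p$, as does the continuity result of \cite{Ruszczynski2006} that both you and the paper invoke, which takes finiteness as a hypothesis rather than a conclusion. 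To repair your write-up, either add the standing assumption that $u(X)\in L^p$ for all $X\in L^p$ (equivalently, that $\rho_u$ is real-valued), or restrict attention to utilities with two-sided linear growth; with that amendment the remainder of your argument is sound and matches the paper's proof.
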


\begin{proof}
It is clear that $\rho_u$ is a convex and monotone map. Further, see \cite{Ruszczynski2006} for instance, a real-valued, convex, monotone functional on a Banach lattice is norm continuous. For each $Y\in\mathcal{Q}$, $G_Y\colon L^p\to\mathbb{R}$ defined as $X\mapsto E_Y[-u(X)]$ also is a convex and monotone map.  As $u$ is continuous, $-u$ can be represented over its convex conjugate $u^*$ as \[-u(x)=\sup\limits_{y\in\mathbb{R}}\left\lbrace -xy-u^*(y)\right\rbrace,\:\forall\:x\in\mathbb{R}.\] We then have that \begin{align*}
    G_Y(X)=E_Y\left[\sup\limits_{q\in\mathbb{R}}\{-qX-u^*(q)\}\right]=\sup\limits_{\mathbb{Q}\in L^q}\left\lbrace E_Y[-X\mathbb{Q}]-E_Y[u^*(\mathbb{Q})]\right\rbrace.
\end{align*}
The interchange between supremum and expectation is due to Theorem 14.60 in \cite{Rockafellar2009}, which is valid since $L^p$ spaces are decomposable, i.e. for any $X\in L^p$, $A\in\mathcal{F}$ and $W\in L^\infty$, $1_A X+1_{A^c}W\in L^p$. Thus, the penalty for each $G_Y$ is given as \[\alpha_{G_Y}(\mathbb{Q})=\sup_{X\in L^p}\left\lbrace E_Y[-X\mathbb{Q}]-G_Y(X)\right\rbrace=E_Y[u^*(\mathbb{Q})],\:\forall\:\mathbb{Q}\in L^q.\]  Further, the maximum in the representation for $\rho$ can be taken over the weakly compact $\mathcal{Q}^\prime:=\{\mathbb{Q}\in L^q\colon\alpha_\rho(\mathbb{Q})<\infty\}$.  For any $\mathbb{Q}\in L^q$ we have that \begin{align*}
\alpha_{\rho_u}(\mathbb{Q})&=\sup\limits_{X\in L^p}\left\lbrace E[-X\mathbb{Q}]+\min\limits_{Y\in\mathcal{Q}^\prime}\left\lbrace E_Y[u(X)]+\alpha_\rho(Y)\right\rbrace \right\rbrace\\
&=\min\limits_{Y\in\mathcal{Q}^\prime}\left\lbrace\sup\limits_{X\in L^p}\left\lbrace E[-X\mathbb{Q}] +E_Y[u(X)]\right\rbrace +\alpha_\rho(Y) \right\rbrace\\
&=\min\limits_{Y\in\mathcal{Q}^\prime}\left\lbrace\alpha_\rho(Y)+E_Y[u^*(\mathbb{Q})]\right\rbrace.
\end{align*}
The second inequality is due to Sion minimax theorem, see \cite{Sion1958}, since the map $(X\times Y)\mapsto E[-X\mathbb{Q}]+ E_Y[u(X)]+\alpha_{\rho}(Y)$ is concave and upper semicontinuous in the first argument, over the convex set $L^p$. In contrast, it is convex and weak lower semicontinuous in the second argument, with $\mathcal{Q}^\prime$ weak compact. The claim follows since the minimum is not altered when taken on the whole $\mathcal{Q}$. This concludes the proof.
\end{proof}

\begin{Rmk}
It is straightforward to verify that the supremum for the dual representation of $\rho_u$ can be taken only on those $\mathbb{Q}\geq 0$, in order to keep monotonicity. Moreover, such penalty term keeps some similarity to the one for robust convex risk measures in \cite{Righi2024b}, where the family $\mathbb{Q}\mapsto E_Y[u^*(\mathbb{Q})],\:Y\in\mathcal{Q}$ plays the role for the penalty of their auxiliary maps $g_Y$.
\end{Rmk}

In convex analysis, sub-differentials play a critical role in optimization. For any $f\colon L^p\to\mathbb{R}$, its sub-gradient at $X\in L^p$ is $\partial f(X)=\{Y\in L^q\colon \rho(Z)-\rho(X) \geq E[(Z-X)Y]\:\forall\:Z\in L^p\}$. For a convex risk measure $\rho$, Theorem 3 of \cite{Ruszczynski2006} assures that
       \[\partial \rho(X)=\left\lbrace \mathbb{Q}\in\mathcal{Q}\colon \rho(X)= E_\mathbb{Q}[-X]-\alpha_\rho(\mathbb{Q})\right\rbrace\neq\emptyset.\]  We say $f\colon L^p\to\mathbb{R}$ is Gâteaux differentiable at $X\in L^p$ when $t\mapsto\rho(X+tZ)$ is differentiable at
$t = 0$ for any $Z\in L^p$ and the derivative defines a continuous linear functional on $L^p$. Furthermore,
     $\rho$ is Gâteaux differentiable at $X$ if and only if $\partial \rho(X)=\{\mathbb{Q}\}$, which in this case the derivative turns out to be  $\mathbb{Q}$, i.e. the map $Z\mapsto E_\mathbb{Q}[-Z]$. 

\begin{Crl}\label{crl:sub}
We have that
\begin{equation*}\label{eq:subdif2}
\partial \rho_u(X) = \bigcup\limits_{Y\in\partial\rho(u(X))}  \partial E_Y[-u(X)],\:\forall\:X\in L^p.  
\end{equation*}
In particular, $\rho_u$ is Gâteaux differentiable at $X$ if and only if $\rho$ is Gâteaux differentiable at $u(X)$.
\end{Crl}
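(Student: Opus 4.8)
The plan is to read both the subdifferential identity and the differentiability statement off the dual data already computed in \Cref{dual}, the only additional tool being the elementary fact that, for each of the convex, monotone, continuous maps involved, an element lies in the subdifferential at a point exactly when it attains the corresponding dual representation. Throughout I write $G_Y(Z):=E_Y[-u(Z)]$. Applying the representation of $\rho$ at the point $u(Z)$ gives
\[
\rho_u(Z)=\max_{Y\in\mathcal{Q}'}\bigl\{G_Y(Z)-\alpha_\rho(Y)\bigr\},\qquad Z\in L^p,
\]
and, by the definition of $\partial\rho$ recalled before the statement, $Y\in\partial\rho(u(X))$ says precisely that this maximum is attained at $Y$, i.e. $\rho_u(X)=G_Y(X)-\alpha_\rho(Y)$. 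I also use the dual representation $G_Y(X)=\max_{\mathbb{Q}}\{E[-X\mathbb{Q}]-\alpha_{G_Y}(\mathbb{Q})\}$ from the proof of \Cref{dual}, taking all three penalties with respect to the common pairing $E[-X\mathbb{Q}]$ so that \eqref{eq:pen2} reads $\alpha_{\rho_u}(\mathbb{Q})=\min_{Y}\{\alpha_\rho(Y)+\alpha_{G_Y}(\mathbb{Q})\}$ with the minimum attained.

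For ``$\supseteq$'' I would fix $Y\in\partial\rho(u(X))$ and $\mathbb{Q}\in\partial G_Y(X)$ and chain the two attainments: for every $Z$,
\[
\rho_u(Z)\ge G_Y(Z)-\alpha_\rho(Y)\ge\bigl(E[-Z\mathbb{Q}]-\alpha_{G_Y}(\mathbb{Q})\bigr)-\alpha_\rho(Y),
\]
while at $Z=X$ both inequalities become equalities, the first by activeness of $Y$ and the second because $\mathbb{Q}\in\partial G_Y(X)$. Subtracting shows that $\mathbb{Q}$ attains the dual representation of $\rho_u$ at $X$, hence $\mathbb{Q}\in\partial\rho_u(X)$; this direction needs neither compactness nor a convex hull.

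The reverse inclusion is where \Cref{dual} does the real work. Given $\mathbb{Q}\in\partial\rho_u(X)$ I have $\rho_u(X)=E[-X\mathbb{Q}]-\alpha_{\rho_u}(\mathbb{Q})$, and I would insert the \emph{attained} minimum from \eqref{eq:pen2}, picking $Y^*$ with $\alpha_{\rho_u}(\mathbb{Q})=\alpha_\rho(Y^*)+\alpha_{G_{Y^*}}(\mathbb{Q})$. Combining this with the minorant bound $\rho_u(X)\ge G_{Y^*}(X)-\alpha_\rho(Y^*)$ and the dual inequality $G_{Y^*}(X)\ge E[-X\mathbb{Q}]-\alpha_{G_{Y^*}}(\mathbb{Q})$ produces the sandwich
\[
\rho_u(X)=\bigl(E[-X\mathbb{Q}]-\alpha_{G_{Y^*}}(\mathbb{Q})\bigr)-\alpha_\rho(Y^*)\le G_{Y^*}(X)-\alpha_\rho(Y^*)\le\rho_u(X).
\]
Equality throughout forces simultaneously $\rho_u(X)=G_{Y^*}(X)-\alpha_\rho(Y^*)$, i.e. $Y^*\in\partial\rho(u(X))$, and $G_{Y^*}(X)=E[-X\mathbb{Q}]-\alpha_{G_{Y^*}}(\mathbb{Q})$, i.e. $\mathbb{Q}\in\partial G_{Y^*}(X)$, which is the membership required. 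The main obstacle I anticipate is exactly this step: a generic rule for the subdifferential of a maximum of convex maps would deliver only the closed convex hull of the union, and it is the exactness and \emph{attainment} of the penalty minimum in \eqref{eq:pen2} — itself a consequence of Sion's theorem and the weak compactness of $\mathcal{Q}'$ used in \Cref{dual} — that upgrades the identity to an honest union. I would keep the dual pairing (the weighting by $Y^*$ hidden inside $\alpha_{G_{Y^*}}=E_{Y^*}[u^*(\cdot)]$) consistent across the three representations, since that is the only place the bookkeeping can slip.

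For the differentiability claim I would invoke the equivalence between Gâteaux differentiability and a singleton subdifferential. If $\rho$ is differentiable at $u(X)$, then $\partial\rho(u(X))=\{Y\}$ and the identity collapses to $\partial\rho_u(X)=\partial G_Y(X)$, which is a singleton because $G_Y$ inherits Gâteaux differentiability from the smooth utility $u$; hence $\rho_u$ is differentiable at $X$. Conversely, two distinct measures $Y_1\neq Y_2$ in $\partial\rho(u(X))$ would contribute subgradients $Y_1\,u'(X)$ and $Y_2\,u'(X)$ (up to the common sign), which differ wherever $u'(X)\neq0$, so $\partial\rho_u(X)$ could not be a singleton. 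The delicate point is thus that differentiability and strict monotonicity of the utility are what let the singleton property pass faithfully in both directions, and I would state those standing regularity properties of $u$ explicitly at this step.
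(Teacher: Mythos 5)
Your proof is correct and takes essentially the same route as the paper: both characterize subgradients by attainment in the dual representations and use the attained minimum in \eqref{eq:pen2} from \Cref{dual} to split the single equality $\rho_u(X)=E[-X\mathbb{Q}]-\alpha_{\rho_u}(\mathbb{Q})$ into the two attainments $Y\in\partial\rho(u(X))$ and $\mathbb{Q}\in\partial E_Y[-u(X)]$, which is exactly the sandwich the paper compresses into its chain of equivalences (your two-inclusion write-up, with an arbitrary $Y$ in the easy direction and the argmin $Y^*$ in the hard one, is in fact slightly more careful than the paper's ``iff'' chain, which tacitly fixes $Y$ as the argmin throughout). Your added caveat that the Gâteaux ``if and only if'' needs differentiability and strict monotonicity of $u$ is well taken — the paper dismisses this step as straightforward and indeed leaves those regularity assumptions implicit — but it is a refinement of, not a departure from, the paper's argument.
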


\begin{proof}
For fixed $\mathbb{Q}\in L^q_+$, let $Y$ denote the argmin of \eqref{eq:pen2} with respect to $\mathbb{Q}$. We then have by \Cref{dual} that \begin{align*}
    \mathbb{Q}\in\partial \rho_u(X)&\iff E_\mathbb{Q}[-X]-\alpha_\rho(Y)-E_Y[u^*(\mathbb{Q})]=\rho(u(X))\\
    &\iff E_Y[-u(X)]-\alpha_\rho(Y)=\rho(u(X))\:\textbf{and}\: E_\mathbb{Q}[-X]-E_Y[u^*(\mathbb{Q})]=E_Y[-u(X)]\\
    &\iff Y\in\partial\rho(u(X)) \:\textbf{and}\: \mathbb{Q}\in \partial E_Y[-u(X)]\\
    &\iff \mathbb{Q}\in\bigcup\limits_{Y\in\partial\rho(u(X))}  \partial E_Y[-u(X)].  
    \end{align*} 
    The claim for the Gâteaux derivative is straightforwardly obtained.
\end{proof}

\begin{Rmk}
 For $L^\infty$, the space of essentially bounded random variables, the results hold when we consider the dual pair $(L^\infty,L^1)$ with its weak* topology, and assume $\rho$ to be Lebesgue continuous, i.e., continuous regarding $\mathbb{P}-a.s.$ dominated convergence. See Theorem 4.33 in \cite{Follmer2016} for the dual representation, and Theorem 21 and Proposition 14 in \cite{Delbaen2012} for sub-differentials. In fact, Examples 3 and 5 in \Cref{sec:exm} remain valid with the same calculations, but with Gateaux derivatives in $L^1$.
\end{Rmk}

We now turn to the formal definition of the hedging optimization problem. With that, we are able to prove its properties and conditions for solution. The value obtained in the solution refers to the minimal amount of cash that, when injected into the utility of the hedged position, turns it acceptable from a risk measure point of view, provided the
the position is hedged optimally.

\begin{Def}
Let $H$ be a contingent claim, $\mathcal{V}\subseteq \mathcal{V}_{V_0}$ non-empty and convex, and $\inf_{V\in\mathcal{V}}\rho_u\left( V\right)$ finite. Then, the  hedging problem is defined as \begin{equation*}
    P (H)\::\:\inf\limits_{V\in\mathcal{V}}\rho_u\left( V-H\right).
\end{equation*}
\end{Def}

\begin{Rmk}
The absence of arbitrage is not a requirement in our approach since we only have the no irrelevance assumption given as $P(0)=\inf_{V\in\mathcal{V}}\rho_u\left( V\right)>-\infty$. It is important to note that market irrelevance is not only implied by explicit arbitrage opportunities. Statistical arbitrage can also manifest itself in markets where arbitrage is not apparent. 
\end{Rmk}

\begin{Thm}\label{prp:P}
The map $H\mapsto P(H)$ is finite, monotone, convex and continuous. Its penalty term is given as \begin{equation*}
\alpha_{P}(\mathbb{Q})=\alpha_{\rho_u}(\mathbb{Q})+\sup\limits_{V\in\mathcal{V}}E_\mathbb{Q}[V],\:\forall\:\mathbb{Q}\in\mathcal{Q}.\end{equation*}
\end{Thm}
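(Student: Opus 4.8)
The plan is to establish the four structural properties first and then obtain the penalty term by a direct Fenchel conjugation. Monotonicity and convexity I would read off from the corresponding properties of $\rho_u$ given in \Cref{dual}. If $H_1\le H_2$ then $V-H_1\ge V-H_2$ for every $V$, and since $\rho_u$ is monotone this gives $\rho_u(V-H_1)\le\rho_u(V-H_2)$; taking the infimum over $V\in\mathcal{V}$ yields $P(H_1)\le P(H_2)$. For convexity, fix $H_1,H_2$ and $\lambda\in[0,1]$; for arbitrary $V_1,V_2\in\mathcal{V}$ the point $V=\lambda V_1+(1-\lambda)V_2$ lies in $\mathcal{V}$ (convexity of $\mathcal{V}$) and satisfies $V-(\lambda H_1+(1-\lambda)H_2)=\lambda(V_1-H_1)+(1-\lambda)(V_2-H_2)$, so convexity of $\rho_u$ gives $P(\lambda H_1+(1-\lambda)H_2)\le\lambda\rho_u(V_1-H_1)+(1-\lambda)\rho_u(V_2-H_2)$; minimizing over $V_1$ and $V_2$ separately gives convexity. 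Finiteness from above is immediate: since $\mathcal{V}\neq\emptyset$, picking any $V$ yields $P(H)\le\rho_u(V-H)<\infty$, so $\dom P=L^p$.

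The delicate point is finiteness from below, where I would exploit that $P$ is convex with full effective domain together with the no-irrelevance assumption $P(0)>-\infty$. If $P(H_0)=-\infty$ for some $H_0$, then by convexity $P(0)=P(\tfrac{1}{2}H_0+\tfrac{1}{2}(-H_0))\le\tfrac{1}{2}P(H_0)+\tfrac{1}{2}P(-H_0)=-\infty$, since $P(-H_0)<\infty$; this contradicts $P(0)>-\infty$. Hence $P$ is real-valued. A real-valued, convex, monotone functional on the Banach lattice $L^p$ is norm-continuous by the same result of \cite{Ruszczynski2006} invoked in \Cref{dual}, which delivers continuity.

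For the penalty term I would compute the conjugate directly. Writing $\alpha_P(\mathbb{Q})=\sup_{H\in L^p}\{E_\mathbb{Q}[H]-P(H)\}$ and inserting the definition of $P$, the outer and inner optimizations merge into $\sup_{H}\sup_{V\in\mathcal{V}}\{E_\mathbb{Q}[H]-\rho_u(V-H)\}$. The change of variables $X=V-H$ (so $H=V-X$ and $E_\mathbb{Q}[H]=E_\mathbb{Q}[V]-E_\mathbb{Q}[X]$) makes the objective additively separable, namely $E_\mathbb{Q}[V]+(E_\mathbb{Q}[-X]-\rho_u(X))$, so the double supremum splits cleanly into $\sup_{V\in\mathcal{V}}E_\mathbb{Q}[V]+\sup_{X\in L^p}\{E_\mathbb{Q}[-X]-\rho_u(X)\}=\sup_{V\in\mathcal{V}}E_\mathbb{Q}[V]+\alpha_{\rho_u}(\mathbb{Q})$, which is the claimed expression. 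Since $P$ is proper, convex, and continuous (hence lower semicontinuous), Fenchel--Moreau guarantees $P(H)=\sup_{\mathbb{Q}}\{E_\mathbb{Q}[H]-\alpha_P(\mathbb{Q})\}$, so $\alpha_P$ is genuinely its penalty term.

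The main obstacle I anticipate is the lower-boundedness step: recognizing that convexity together with a full effective domain upgrades the single-point hypothesis $P(0)>-\infty$ into real-valuedness everywhere. Alongside this, the only bookkeeping that requires care is the sign convention, since here $P$ couples with $\mathbb{Q}$ through $E_\mathbb{Q}[H]$ rather than $E_\mathbb{Q}[-X]$, reflecting that $P$ is increasing in the claim $H$; getting this right is what makes the conjugate reproduce the stated formula without spurious sign flips. By contrast, the supremum splitting needs no minimax theorem, unlike \Cref{dual}, because the separability after substitution is exact.
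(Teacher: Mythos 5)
Your proof is correct, and on the penalty term it takes a genuinely different and more elementary route than the paper. For the structural properties you match the paper's proof, except that you usefully spell out two steps the paper compresses: the explicit midpoint argument showing that $P(H_0)=-\infty$ together with $\dom P=L^p$ would force $P(0)=-\infty$ (the paper dismisses this as ``a well-known fact from convex analysis''), and the same \cite{Ruszczynski2006} continuity result. The real divergence is in computing $\alpha_P$. The paper first substitutes the dual representation $\rho_u(V-H)=\max_{Y\in\mathcal{Q}^\prime}\{E_Y[H-V]-\alpha_{\rho_u}(Y)\}$ from \Cref{dual} and then invokes Sion's minimax theorem \cite{Sion1958} to swap the outer supremum over $(H,V)$ with the minimum over $Y$, which requires weak compactness of $\mathcal{Q}^\prime$ and semicontinuity checks; you instead conjugate directly, and your change of variables $X=V-H$ (a bijection of $L^p$ for each fixed $V$) makes the objective $E_\mathbb{Q}[V]+E_\mathbb{Q}[-X]-\rho_u(X)$ exactly separable, so the double supremum splits with no minimax argument, no compactness, and in fact no use of \Cref{dual} at all --- only the definition of $\alpha_{\rho_u}$ as a conjugate. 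Your closing appeal to Fenchel--Moreau (properness, convexity, and norm continuity giving lower semicontinuity, hence $P=P^{**}$) also makes precise the paper's terse ``$P$ can be represented over $\alpha_P$ with the due sign change''; strictly speaking Fenchel--Moreau yields a supremum over all of $L^q$, with the restriction to nonnegative $\mathbb{Q}$ following from monotonicity of $P$, but the paper is equally informal on this point, so it is not a gap in your argument relative to the original.
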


\begin{proof}
Monotonicity is straightforward. For convexity, we have for any $H_1,H_2$ and any $\lambda\in[0,1]$ that \begin{align*}
P(\lambda H_1+(1-\lambda)H_2)&=\inf\limits_{V_1,V_2\in\mathcal{V}}\rho_u\left( \lambda(V_1-H_1)+(1-\lambda)(V_2-H_2)\right)\\
&\leq\inf\limits_{V_1,V_2\in\mathcal{V}}\left\lbrace\lambda\rho_u\left( V_1-H_1\right)+(1-\lambda)\rho_u\left( V_2-H_2\right)\right\rbrace\\
&=\lambda P(H_1)+(1-\lambda)P(H_2).
\end{align*}
Moreover, since $P<\infty$ and $P(0)>-\infty$, is a well-known fact from convex
analysis that $P$ is finite. Thus, similarly as $\rho_u$, we have that $P$ is continuous. Then, $P$ can be represented over $\alpha_P$ with the due sign change. Let $\mathbb{I}_A$ be the convex characteristic function of $A\subseteq L^q$, i.e. $\mathbb{I}_A(\mathbb{Q})=0$ if $\mathbb{Q}\in A$, and $\mathbb{I}_A(\mathbb{Q})=\infty$, otherwise. Thus, we have for any $\mathbb{Q}\in L^q$ that \begin{align*}
\alpha_P(\mathbb{Q})&=\sup\limits_{H\in L^p,V\in\mathcal{V}}\left\lbrace E[H\mathbb{Q}]-\max\limits_{Y\in\mathcal{Q}^\prime}\left\lbrace E_Y[H-V]-\alpha_{\rho_u}(Y)\right\rbrace\right\rbrace\\
&=\min\limits_{Y\in\mathcal{Q}^\prime}\sup\limits_{H\in L^p,V\in\mathcal{V}}\left\lbrace E[H\mathbb{Q}]+ E_Y[V-H]+\alpha_{\rho_u}(Y)\right\rbrace\\
&=\min\limits_{Y\in\mathcal{Q}^\prime}\left\lbrace\sup\limits_{H\in L^p} \left\lbrace E[H\mathbb{Q}]- E_Y[H]\right\rbrace+\alpha_{\rho_u}(Y)+\sup\limits_{V\in\mathcal{V}}E_Y[V]\right\rbrace\\
&=\min\limits_{Y\in\mathcal{Q}^\prime}\left\lbrace\mathbb{I}_{\{Y\}}(\mathbb{Q})+\alpha_{\rho_u}(Y)+\sup\limits_{V\in\mathcal{V}}E_Y[V]\right\rbrace\\
&=\alpha_{\rho_u}(\mathbb{Q})+\sup\limits_{V\in\mathcal{V}}E_\mathbb{Q}[V].\end{align*}
The second inequality is due to Sion minimax Theorem, since the map $(H\times V,Y)\mapsto E[H\mathbb{Q}]+ E_Y[V-H]+\alpha_{\rho_u}(Y)$ is convex and lower semicontinuous in the first argument, over the convex set $L^p\times\mathcal{V}$. In contrast, it is concave and weak upper semicontinuous in the second argument over the weak compact $\mathcal{Q}^\prime$. 
\end{proof}

From the convexity of both $\rho_u$ and $\mathcal{V}$, it is clear that a necessary and sufficient condition of optimality is $\nabla_h\rho_u(V-H)=0$. Thus, obtaining such a derivative is crucial for a solution. We now relate the Gâteaux derivative of $\rho_u$ with the optimality of the hedging problem. Recall that the normal cone of a convex set $A\subseteq\mathbb{R}^n$ at $x\in A$ is $N_A(x)=\{y\in\mathbb{R}^n\colon (z-x)^Ty\leq0\:\forall\:z\in A\}$.

\begin{Thm}\label{prp:solution}
 Let $\rho_u$ be Gâteaux differentiable. Then, $V$ is a solution for $P(H)$ if and only if \[E_{\mathbb{Q}^{V}}[\Delta S_i]\in N_\mathcal{V}(V)\:\forall\:i=1,\dots,n,\:\mathbb{Q}^{V}\in\partial \rho_u(V-H).\] 
\end{Thm}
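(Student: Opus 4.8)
The plan is to reduce $P(H)$ to a finite-dimensional convex program in the parameter vector $h$ and then invoke the first-order variational characterization of minimizers of a differentiable convex function over a convex set. Since every $V\in\mathcal{V}\subseteq\mathcal{V}_{V_0}$ has the form $V=V_0+\sum_{i=1}^n h_i\Delta S_i$, I identify $\mathcal{V}$ with a convex subset of $\mathbb{R}^n$ through the parameter $h$ and define $f\colon\mathbb{R}^n\to\mathbb{R}$ by $f(h)=\rho_u\big(V_0+\sum_{i=1}^n h_i\Delta S_i-H\big)$. Because $h\mapsto V_0+\sum_i h_i\Delta S_i-H$ is affine and $\rho_u$ is convex by \Cref{dual}, $f$ is a finite convex function, and minimizing $f$ over the convex set $\mathcal{V}$ is exactly $P(H)$.

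Next I would compute $\nabla f$. For each coordinate direction, the partial derivative $\partial f/\partial h_i(h)$ is precisely the one-sided derivative $\frac{d}{dt}\rho_u(V-H+t\Delta S_i)\big|_{t=0}$, where $V=V_0+\sum_j h_j\Delta S_j$ is the position attached to $h$. Since $\rho_u$ is assumed Gâteaux differentiable, $\partial\rho_u(V-H)=\{\mathbb{Q}^{V}\}$ is a singleton and the derivative is the continuous linear functional $Z\mapsto E_{\mathbb{Q}^V}[-Z]$, as recalled before \Cref{crl:sub}. Evaluating at the fixed directions $Z=\Delta S_i\in L^p$ yields $\partial f/\partial h_i(h)=E_{\mathbb{Q}^V}[-\Delta S_i]$, so that $\nabla f(h)=-\big(E_{\mathbb{Q}^V}[\Delta S_1],\dots,E_{\mathbb{Q}^V}[\Delta S_n]\big)$ with $\mathbb{Q}^V$ the unique subgradient in $\partial\rho_u(V-H)$.

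Finally I would apply the standard first-order condition: for a differentiable convex $f$ and a convex set $\mathcal{V}$, a point $h$ (equivalently $V$) minimizes $f$ over $\mathcal{V}$ if and only if $\langle\nabla f(h),\,g-h\rangle\geq 0$ for every $g\in\mathcal{V}$, which is by definition $-\nabla f(h)\in N_\mathcal{V}(V)$. Substituting the gradient computed above gives $-\nabla f(h)=\big(E_{\mathbb{Q}^V}[\Delta S_1],\dots,E_{\mathbb{Q}^V}[\Delta S_n]\big)\in N_\mathcal{V}(V)$, i.e. $E_{\mathbb{Q}^V}[\Delta S_i]\in N_\mathcal{V}(V)$ for all $i$, which is the claim. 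Convexity of $f$ is what renders the variational inequality both necessary and sufficient, so no separate argument for the converse is required.

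I expect the main obstacle to be bookkeeping rather than depth: getting the two sign reversals right — the one built into the risk-measure Gâteaux derivative $Z\mapsto E_{\mathbb{Q}^V}[-Z]$, and the one relating the variational inequality to the normal cone — so that they compose to the stated condition, and justifying that differentiating the finite-dimensional $f$ along the fixed vectors $\Delta S_i$ legitimately reuses the $L^p$ Gâteaux derivative of $\rho_u$. The latter is immediate, since the $\Delta S_i$ are fixed elements of $L^p$ and Gâteaux differentiability is exactly differentiability along every fixed direction; a minor point to note in passing is that the identification of $\mathcal{V}$ with a subset of $\mathbb{R}^n$ is the natural reading consistent with the $\mathbb{R}^n$-valued normal cone recalled before the statement.
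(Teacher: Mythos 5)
Your proposal is correct and follows essentially the same route as the paper: parametrize the hedge by $h\in\mathbb{R}^n$, compute the gradient of the composite map via the Gâteaux derivative $Z\mapsto E_{\mathbb{Q}^V}[-Z]$ of $\rho_u$ along the fixed directions $\Delta S_i$, and apply the normal-cone first-order condition for convex programs. The only difference is presentational: where the paper cites Propositions 1--3 of \cite{Ruszczynski2006} for the chain rule and the $\argmin$--normal-cone equivalence, you derive both inline, with the sign bookkeeping handled correctly.
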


\begin{proof}
Let $f\colon\mathbb{R}^n\to L^p$ be given as $f(h)=V_0+\sum_{i=1}^n h_i\Delta S_i-H$. Then, linearity implies that $f(\cdot)(\omega)$ is differentiable for  $\mathbb{P}-a.s.$ every $\omega$. This implies that it is also differentiable for  $\mathbb{Q}-a.s.$ every $\omega$ for any $\mathbb{Q}\in\mathcal{Q}$. More specifically, $\nabla_h  f(V)=(\Delta S_1,\dots,\Delta S_n)$ for any $V\in\mathcal{V}$. Thus, Proposition 1 of \cite{Ruszczynski2006} assures that $\rho_u\circ f$ is differentiable with $\nabla_h  (\rho_u\circ f)(x)=-E[\nabla_h  f(x)\mathbb{Q}]$, where $\mathbb{Q}\in\partial \rho_u(f(x))$. Further, Propositions 2 and 3 of \cite{Ruszczynski2006} states that \[x\in\argmin\limits_{y\in A}\rho_u(f(y))\:\iff\:E[\nabla_h  f(x)\mathbb{Q}]\in N_A(x),\]
   where $A\subseteq\mathbb{R}^n$ is convex. Thus, $V$ is a solution if and only if $E[\nabla_h  f(V)\mathbb{Q}^{V}]\in N_\mathcal{V}(V)$, where $\mathbb{Q}^V\in\partial\rho_u(V-H)$. 
\end{proof}

\begin{Rmk}
When $\mathcal{V}=\mathcal{V}_{V_0}$, the condition in \Cref{prp:solution} simplifies to $E_{\mathbb{Q}^{V}}[\Delta S_i]=0$ for any $i=1,\dots,n$ since $\mathcal{V}_{V_0}$ is isomorphic to $\mathbb{R}^n$, which implies $N_{\mathcal{V}_{V_0}}(V)=0$ for any $V$. If $\mathcal{V}$ is compact (closed and bounded) and convex, then $P(H)$ has a solution since the map $f\colon V\mapsto\rho_u(V-H)$ is continuous. Moreover, as $\mathcal{V}$ is compact, the argmin is bounded. Furthermore, convexity and continuity of $f$ imply that the argmin is closed and convex, hence compact.    
\end{Rmk}

From $P(H)$, one can determine the indifference pricing of $H$. To that, we have to normalize $P(H)$ by the capital required for hedging a zero claim. We now define the prices for both seller and buyer.

\begin{Def}
The seller and buyer prices of a contingent claim $H$ are given, respectively, as \begin{align*}
SP(H)=P(H)-P(0)\:\text{and}\:BP(H)=-P(-H)+P(0).
\end{align*}
\end{Def}

We now relate such indifference prices with the fundamental theorems of asset pricing. Recall that set of equivalent martingale measures, $EMM\subset\mathcal{Q}$, are the probability measures equivalent to $\mathbb{P}$ such that $E_\mathbb{Q}[\Delta S_i]=0$ for any $i=1,\dots,n$. It is well known that the market is free of arbitrage if and only if $EMM\neq\emptyset$, while it is complete if and only if $EMM$ is a singleton.

\begin{Lmm}\label{prp:price} 
$SP(H)\geq BP(H)\geq 0$. If $H$ is attainable and $\mathcal{V}=\mathcal{V}_{V_0}$, then $SP(H)=SP(V_0)$ and $BP(H)=BP(V_0)$.   
\end{Lmm}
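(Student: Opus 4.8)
The plan is to derive everything from the structural properties of $P$ established in \Cref{prp:P} --- that $H\mapsto P(H)$ is convex and monotone --- together with the affine structure of $\mathcal{V}_{V_0}$. No appeal to the penalty representation is needed.

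For the first chain of inequalities I would begin with $SP(H)\geq BP(H)$, which uses only convexity. Unfolding the definitions gives $SP(H)-BP(H)=P(H)+P(-H)-2P(0)$, and since $0=\tfrac12 H+\tfrac12(-H)$, convexity of $P$ yields $P(0)\leq\tfrac12 P(H)+\tfrac12 P(-H)$, so the difference is nonnegative. For $BP(H)\geq 0$ I would invoke monotonicity. First note that $\rho_u$ is antitone, since $u$ is increasing and $\rho$ reverses order; consequently $H\mapsto P(H)=\inf_{V\in\mathcal{V}}\rho_u(V-H)$ is nondecreasing, because $H_1\leq H_2$ forces $V-H_1\geq V-H_2$ and hence $\rho_u(V-H_1)\leq\rho_u(V-H_2)$ for each $V$. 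Taking $H\geq 0$ (the standing nonnegativity convention for a contingent claim) we get $-H\leq 0$, so $P(-H)\leq P(0)$ and therefore $BP(H)=P(0)-P(-H)\geq 0$. The bound $SP(H)\geq 0$ then follows transitively.

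For the second statement I would exploit that, when $\mathcal{V}=\mathcal{V}_{V_0}$, the map $P$ is invariant under translating the claim by any element of the trading span. Concretely, if $W=\sum_{i=1}^n c_i\Delta S_i$, then the substitution $h_i\mapsto h_i-c_i$ is a bijection of $\mathbb{R}^n$ and leaves the infimum defining $P$ unchanged, so $P(H+W)=P(H)$ for every $H$. Since $H$ attainable means $H=V_0+\sum_{i=1}^n h_i^H\Delta S_i$, we may write $H=V_0+W$ with $W=\sum_{i=1}^n h_i^H\Delta S_i$ in the span, whence $P(H)=P(V_0)$ and thus $SP(H)=SP(V_0)$. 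Likewise $-H=-V_0+W'$ with $W'=-\sum_{i=1}^n h_i^H\Delta S_i$ in the span, giving $P(-H)=P(-V_0)$ and hence $BP(H)=BP(V_0)$.

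The two display manipulations are routine. The one point requiring care is fixing the direction of monotonicity of $P$: it is \emph{nondecreasing} in $H$ even though $\rho_u$ is antitone, and it is precisely this that makes $BP(H)\geq 0$ rest on the nonnegativity of $H$ rather than on monotonicity alone. The shift-invariance argument for the attainable case is elementary once one sees that optimizing over all of $\mathbb{R}^n$ absorbs any span-valued translation; the only bookkeeping subtlety there is correctly tracking the sign of $V_0$ when passing from $H$ to $-H$.
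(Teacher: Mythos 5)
Your proof is correct and follows essentially the same route as the paper's: convexity of $P$ applied at $0=\tfrac12 H+\tfrac12(-H)$ gives $SP(H)\geq BP(H)$, monotonicity of $P$ gives $BP(H)\geq 0$, and the change of variables $h\mapsto h-h^*$ over $\mathbb{R}^n$ (your span-translation invariance) handles the attainable case, exactly as in the paper. Your explicit flagging that the step $P(-H)\leq P(0)$ requires the convention $H\geq 0$ — which the paper uses silently — is a minor gain in rigor, not a different argument.
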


\begin{proof}
Since $P$ is convex, we have \[P(0)= P\left(\frac{1}{2}(H-H)\right)\leq \frac{1}{2}(P(H)+P(-H)).\] Then, $SP(H)\geq BP(H)$. Moreover, by monotonicity of $P$, we have that $P(-H)\leq P(0)$. Hence, $BP(H)\geq 0$. When $\mathcal{V}=\mathcal{V}_{V_0}$, if $H$ is attainable there is some $V^*$ such as $H=V_0+h^*\Delta S$. We thus obtain\begin{align*}
P(H)&=\inf\limits_{h\in\mathbb{R}^n}\rho_u\left( V_0+h^\prime\Delta S-(V_0+h^*\Delta S)\right)\\
&=\inf\limits_{h\in\mathbb{R}^n}\rho_u\left( V_0+(h^\prime-h^*)\Delta S-V_0\right)\\
&=\inf\limits_{h\in\mathbb{R}^n}\rho_u\left( V_0+h^\prime\Delta S-V_0\right)=P(V_0).
\end{align*}
The last inequality follows since $\mathbb{R}^n-h^*=\mathbb{R}^n$. Hence, $SP(H)=SP(V_0)$. The claim for $BP$ is analogously obtained.
\end{proof}

\begin{Thm}\label{thm:price} 
Let $\rho_u(X)\leq -\essinf X$ for any $X\in L^p$ and $\mathcal{V}=\mathcal{V}_{V_0}$. If the market is free of arbitrage, then \[\sup\limits_{\mathbb{Q}\in EMM}E_\mathbb{Q}[H]\geq SP(H)\geq BP(H)\geq\inf\limits_{\mathbb{Q}\in EMM}E_\mathbb{Q}[H].\] If, in addition, the market is complete, then $SP(H)=BP(H)=SP(V_0)=BP(V_0)$.  
\end{Thm}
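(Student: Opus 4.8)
The plan is to reduce everything to the single upper bound $SP(H)\leq\sup_{\mathbb{Q}\in EMM}E_\mathbb{Q}[H]$. Indeed, applying that bound to $-H$ and using $BP(H)=P(0)-P(-H)=-SP(-H)$ yields $BP(H)\geq-\sup_{\mathbb{Q}\in EMM}E_\mathbb{Q}[-H]=\inf_{\mathbb{Q}\in EMM}E_\mathbb{Q}[H]$, while the middle inequality $SP(H)\geq BP(H)$ is already furnished by \Cref{prp:price}. The starting point for the upper bound is the dual representation associated with \Cref{prp:P}: since $P$ is finite, convex and continuous, $P(H)=\sup_{\mathbb{Q}}\{E_\mathbb{Q}[H]-\alpha_P(\mathbb{Q})\}$ with $\alpha_P(\mathbb{Q})=\alpha_{\rho_u}(\mathbb{Q})+\sup_{V\in\mathcal{V}}E_\mathbb{Q}[V]$.

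First I would pin down the effective domain of $\alpha_P$. The hypothesis $\rho_u(X)\leq-\essinf X$ is exactly what guarantees that the relevant dual elements are genuine probability measures: comparing penalties gives $\alpha_{\rho_u}(\mathbb{Q})\geq\sup_{X\in L^p}\{E_\mathbb{Q}[-X]+\essinf X\}$, and testing constants $X\equiv c$ shows this supremum is $+\infty$ unless $E[\mathbb{Q}]=1$; together with $\mathbb{Q}\geq0$ from monotonicity, this forces $\operatorname{dom}\alpha_{\rho_u}\subseteq\mathcal{Q}$. Next, for $\mathbb{Q}\in\mathcal{Q}$ one has $\sup_{V\in\mathcal{V}_{V_0}}E_\mathbb{Q}[V]=\sup_{h}\{V_0+\sum_i h_iE_\mathbb{Q}[\Delta S_i]\}$, which is finite (and then equal to $V_0$) if and only if $E_\mathbb{Q}[\Delta S_i]=0$ for all $i$. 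Writing $\mathcal{M}:=\{\mathbb{Q}\in\mathcal{Q}\colon E_\mathbb{Q}[\Delta S_i]=0,\ i=1,\dots,n\}$ for the absolutely continuous martingale measures, I conclude $\operatorname{dom}\alpha_P\subseteq\mathcal{M}$ with $\alpha_P(\mathbb{Q})=\alpha_{\rho_u}(\mathbb{Q})+V_0$ there.

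The bound itself is then a short computation. Since $P(0)=\sup_{\mathbb{Q}\in\mathcal{M}}\{-\alpha_P(\mathbb{Q})\}=-\inf_{\mathbb{Q}\in\mathcal{M}}\alpha_P(\mathbb{Q})$, for every $\mathbb{Q}\in\mathcal{M}$ we have $E_\mathbb{Q}[H]-\alpha_P(\mathbb{Q})-P(0)=E_\mathbb{Q}[H]-\alpha_P(\mathbb{Q})+\inf_{\mathbb{Q}'}\alpha_P(\mathbb{Q}')\leq E_\mathbb{Q}[H]$; taking the supremum over $\mathbb{Q}\in\mathcal{M}$ gives $SP(H)=P(H)-P(0)\leq\sup_{\mathbb{Q}\in\mathcal{M}}E_\mathbb{Q}[H]$. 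It remains to replace $\mathcal{M}$ by $EMM$, and this is the step I expect to be the main obstacle, since a priori $\mathcal{M}$ contains measures that are only absolutely continuous rather than equivalent. Here I would use no-arbitrage to fix some $\mathbb{Q}_0\in EMM\neq\emptyset$ and, for arbitrary $\mathbb{Q}_1\in\mathcal{M}$, form $\mathbb{Q}_\lambda=(1-\lambda)\mathbb{Q}_1+\lambda\mathbb{Q}_0$; for $\lambda\in(0,1]$ its density is bounded below by $\lambda\,\frac{d\mathbb{Q}_0}{d\mathbb{P}}>0$, so $\mathbb{Q}_\lambda$ is equivalent, and linearity keeps it a martingale measure, whence $\mathbb{Q}_\lambda\in EMM$. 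Letting $\lambda\to0^+$, $E_{\mathbb{Q}_\lambda}[H]\to E_{\mathbb{Q}_1}[H]$ (finite by H\"older, as $H\in L^p$ and $\frac{d\mathbb{Q}}{d\mathbb{P}}\in L^q$), so $\sup_{\mathbb{Q}\in EMM}E_\mathbb{Q}[H]\geq E_{\mathbb{Q}_1}[H]$ for every $\mathbb{Q}_1\in\mathcal{M}$, i.e. $\sup_{\mathcal{M}}=\sup_{EMM}$. This closes the first chain of inequalities.

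For the complete-market case I would argue that $EMM=\{\mathbb{Q}^*\}$ is a singleton, so that $\sup_{\mathbb{Q}\in EMM}E_\mathbb{Q}[H]=\inf_{\mathbb{Q}\in EMM}E_\mathbb{Q}[H]=E_{\mathbb{Q}^*}[H]$ collapses the whole chain into $SP(H)=BP(H)=E_{\mathbb{Q}^*}[H]$. Since completeness means every $H$ is attainable and $\mathcal{V}=\mathcal{V}_{V_0}$, \Cref{prp:price} additionally gives $SP(H)=SP(V_0)$ and $BP(H)=BP(V_0)$, yielding the final equalities $SP(H)=BP(H)=SP(V_0)=BP(V_0)$.
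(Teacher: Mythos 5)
Your proof is correct, and it takes a genuinely different route from the paper's. The paper argues on the primal side: the hypothesis $\rho_u(X)\leq-\essinf X$ gives the pointwise bounds $P(H)+V_0\leq\inf_{h}\esssup(H-h^\prime\Delta S)$ and $-P(-H)-V_0\geq\sup_{h}\essinf(H-h^\prime\Delta S)$, which are converted into the stated inequalities via the classical super-/sub-hedging duality (Theorem 1.32 of \cite{Follmer2016}), together with a separate normalization step showing $V_0=-P(0)$ that exploits attainability of constants under no-arbitrage and \Cref{prp:price}. You instead work entirely on the dual side of \Cref{prp:P}: the $\essinf$ hypothesis enters only to force $\dom\alpha_{\rho_u}\subseteq\mathcal{Q}$ (normalization $E[\mathbb{Q}]=1$ by testing constants, positivity by monotonicity), finiteness of $\sup_{V\in\mathcal{V}_{V_0}}E_\mathbb{Q}[V]$ then forces the martingale property, and the bound $SP(H)\leq\sup_{\mathbb{Q}\in\mathcal{M}}E_\mathbb{Q}[H]$ drops out of $\alpha_P(\mathbb{Q})\geq\inf\alpha_P=-P(0)$ --- neatly sidestepping the identity $V_0=-P(0)$ altogether, since $SP$ subtracts $P(0)$ anyway. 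Your convex-combination upgrade from absolutely continuous martingale measures $\mathcal{M}$ to $EMM$ is sound: the mixture's density is bounded below by $\lambda\,\frac{d\mathbb{Q}_0}{d\mathbb{P}}>0$, stays in $L^q$, and $E_{\mathbb{Q}_\lambda}[H]$ is affine in $\lambda$, so no limit subtleties arise; this step also isolates precisely where no-arbitrage is used, whereas in the paper it is buried inside the quoted duality theorem. What each approach buys: the paper's route additionally delivers the interpretable identity $P(0)=-V_0$ (so $SP(H)=P(H)+V_0$), while yours is more self-contained relative to the paper's own results --- it leans on \Cref{prp:P} rather than external hedging duality --- and makes transparent that the hypothesis $\rho_u(X)\leq-\essinf X$ serves only to normalize the dual variables; both rely equally on the quoted FTAP characterizations of $EMM$ and treat the complete-market case identically. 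One sentence worth adding to your write-up: $\dom\alpha_P\neq\emptyset$, so that the dual representation and $\inf\alpha_P=-P(0)$ are meaningful, is automatic from the standing assumption $P(0)>-\infty$ together with the finiteness of $P$ established in \Cref{prp:P}.
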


\begin{proof}
Let the market be free of arbitrage. In this case, it is well known, see Theorem 1.32 in \cite{Follmer2016} for instance, that the super-hedging and sub-hedging prices are given, respectively, as \begin{align*}
\inf\{x\in\mathbb{R}\colon \exists\:h\in\mathcal{V}\:\text{s.t.}\:x+h^\prime\Delta S\geq H\}&=\inf\limits_{h\in\mathcal{V}}\esssup(H-h^\prime\Delta S)=\sup \limits_{\mathbb{Q}\in EMM}E_\mathbb{Q}[H],\\
\sup\{x\in\mathbb{R}\colon \exists\:h\in\mathcal{V}\:\text{s.t.}\:H\geq x+h^\prime\Delta S\}&=\sup\limits_{h\in\mathcal{V}}\essinf(H-h^\prime\Delta S)=\inf \limits_{\mathbb{Q}\in EMM}E_\mathbb{Q}[H].
\end{align*}
It is clear that \begin{align*}
P(H)+V_0&=\inf\limits_{h\in\mathcal{V}}\rho_u(V_0 +h^\prime\Delta S-H)+V_0\leq \inf\limits_{h\in\mathcal{V}}\esssup(H-h^\prime\Delta S),\\
-P(-H)-V_0&=-\inf\limits_{h\in\mathcal{V}}\rho_u(V_0 +h^\prime\Delta S+H)-V_0\geq \sup\limits_{h\in\mathcal{V}}\essinf(H-h^\prime\Delta S).
\end{align*}  We now claim that in this context $V_0=-P(0)$. Notice that any constant contingent claim $H=C$ is attainable since $C=\inf_{\mathbb{Q}\in EMM}E_\mathbb{Q}[C]=\sup _{\mathbb{Q}\in EMM}E_\mathbb{Q}[C]$. Then, both $V_0$ and $-V_0$ are attainable. Thus, by \Cref{prp:price}, we have that $P(V_0)=P(-V_0)$. Since $0$ also is attainable, we have that $0=SP(0)=SP(V_0)=P(V_0)-P(0)$. Thus, $P(V_0)=P(0)$. Further, by considering $H=V_0$, we have that \[-P(-V_0)-V_0\geq\inf\limits_{\mathbb{Q}\in EMM}E_\mathbb{Q}[V_0]=V_0=\sup\limits_{\mathbb{Q}\in EMM}E_\mathbb{Q}[V_0]\geq P(V_0)+V_0.\] Thus, $-2V_0=P(V_0)+P(-V_0)=2P(0)$. Hence, we have that both $SP(H)\leq\sup_{\mathbb{Q}\in EMM}E_\mathbb{Q}[H]$ and $BP(H)\geq\inf_{\mathbb{Q}\in EMM}E_\mathbb{Q}[H]$. The claim now follows by \Cref{prp:price} since $SP(H)\geq BP(H)$. If the market is complete, then $EMM=\{\mathbb{Q}^*\}$. In this case the price of $H$ is given as $E_{\mathbb{Q}^*}[H]=\inf_{\mathbb{Q}\in EMM}E_\mathbb{Q}[H]=\sup_{\mathbb{Q}\in EMM}E_\mathbb{Q}[H]$. Thus, we directly have $SP(H)=BP(H)$. This concludes the proof. 
\end{proof}

\section{Examples}\label{sec:exm}

 The minimization problem that defines $P$ can be equivalently formulated regarding returns or price variations. Let $\Delta H = H-V_0$. Then \[\rho_u(V-H)=\rho_u\left( V_0+h^\prime\Delta S-(V_0+\Delta H)\right)=\rho_u\left( h^\prime\Delta S-\Delta H\right).\] In this case, we can directly interpret $h$ as weighting schemes instead of numbers of shares. Thus, in the following examples we consider $\mathcal{V}=\{h\in\mathbb{R}^n\colon\sum_{i=1}^n h_i=1\}$. It is non-empty, closed, and convex despite not being bounded. One can also consider only its positive part, i.e., the one with a short-selling restriction as $\{h\in\mathbb{R}^n\colon\sum_{i=1}^n h_i=1,\:h\geq 0\}$.

 \begin{Exm}
    Let $\rho(X)=-E[X]$ and $u(x)=-e^{-a x}+1,\:a>0$. Consider the special case where $\Delta S-\Delta H$ is multivariate normally distributed with vector of means $\mu=E[ (\Delta S-\Delta H)]$ and covariance matrix $\Sigma=\Sigma(\Delta S-\Delta H)$. For each $h\in\mathcal{V}$ we have that $e^{-a(h^\prime\Delta S-\Delta H)}$ has a log-normal distribution. Denote $\mathbf{1}$ the constant vector 1 in $\mathbb{R}^n$. Then the optimal solution to $P(H)$ is \[h=\frac{\lambda\Sigma^{-1}\mathbf{1}+\Sigma^{-1}\mu}{a},\:\text{where}\:\lambda=\frac{a-\mathbf{1}^\prime \Sigma^{-1}\mu}{\mathbf{1}^\prime \Sigma^{-1}\mathbf{1}}.\]  The claim follows because the optimal hedging problem becomes \begin{align*}
   & \inf\limits_{\{h\in\mathbb{R}^n\colon h^\prime\mathbf{1}=1\}}\left\lbrace\exp\left(a\left(-h^\prime\mu+\frac{a h^\prime \Sigma h}{2}\right)\right)-1\right\rbrace.
\end{align*}
It is straightforward to observe that $h^*$ is a solution to this problem if and only if it is a solution to \[\inf\limits_{\{h\in\mathbb{R}^n\colon h^\prime\mathbf{1}=1\}}\left\lbrace -h^\prime \mu+\frac{a h^\prime\Sigma h}{2}\right\rbrace.\] Thus, from convexity in $h$, the solution is given through the following Lagrangian multipliers equation system \[\begin{cases}\nabla_h \left(-h^\prime \mu+\dfrac{a h^\prime\Sigma h}{2}\right)\\
+\lambda \nabla_h (\mathbf{1}^\prime h-1)=0\\
h^\prime \mathbf{1}=1\end{cases}\iff\begin{cases}-\mu+a\Sigma h-\lambda \mathbf{1}=0\\
h^\prime \mathbf{1}=1\end{cases}\iff\begin{cases}h=\Sigma^{-1}\dfrac{(\lambda \mathbf{1}+\mu)}{a}\\
1=\mathbf{1}^\prime h\end{cases}.\]
By solving this system we get that  \[h=\frac{\lambda\Sigma^{-1}\mathbf{1}+\Sigma^{-1}\mu}{a}\:\text{and}\:\lambda=\frac{a-\mathbf{1}^\prime \Sigma^{-1}\mu}{\mathbf{1}^\prime \Sigma^{-1}\mathbf{1}}. \]
\end{Exm}

\begin{Exm}
The choice for the exponential utility map from the last example is related to the entropic risk measure (ENT), which is a map that depends on the user's risk aversion through such an exponential utility function. Formally, it is the map $ENT^a\colon L^1\to\mathbb{R}$  defined as \[ENT^a (X) = \frac{1}{a} \log \E [ e^{-aX}],\:a>0.\]  Let $u(x)=x$, and again $\Delta S-\Delta H$ follows a multivariate normal distributed with parameters $\mu=E[ (\Delta S-\Delta H)]$ and $\Sigma=\Sigma (\Delta S-\Delta H)$. The optimization problem then becomes 
\begin{align*}
&\inf\limits_{\{h\in\mathbb{R}^n\colon h^\prime\mathbf{1}=1\}}\left\lbrace\dfrac{1}{a}\log\left(\exp\left(a\left(-h^\prime\mu+\frac{a h^\prime \Sigma h}{2}\right)\right)\right)\right\rbrace\\
=&\inf\limits_{\{h\in\mathbb{R}^n\colon h^\prime\mathbf{1}=1\}}\left\lbrace -h^\prime \mu+\frac{a h^\prime\Sigma h}{2}\right\rbrace.
\end{align*} Thus, the solution is the same as the one in the previous example.
\end{Exm}

\begin{Exm}
Let $F_{X}(x) = P(X\leq x)$ and $F_{X}^{-1}(\alpha)=\inf\{x\in\mathbb{R}\colon F_X (x)\geq\alpha\}$ for $\alpha \in (0,1)$ be, respectively, the distribution function and the (left) quantile of $X$. The Value at Risk (VaR), which is defined as $VaR^\alpha(X)=-F^{-1}_X(\alpha)$, $\alpha\in(0,1)$, is the most prominent example of tail risk measure, despite not being convex. It generates the canonical example for a convex tail risk measure, the Expected Shortfall (ES), that is functional $ES^\alpha\colon L^1\to\mathbb{R}$ defined as \[ES^\alpha(X)=\frac{1}{\alpha}\int_0^\alpha VaR^udu,\:\alpha\in(0,1).\] The ES is positive homogeneous in the sense that $ES^\alpha(\lambda X)=\lambda ES^\alpha(X)$ for any $X\in L^1$ and any $\lambda\geq0$. Moreover, it is Gâteaux differentiable at any $X\in L^1$ with derivative $\frac{1}{\alpha}1_{X\leq F^{-1}_X(\alpha)}$. Let $\rho=ES^\alpha$ and $u(x)=x$. Thus,  we have that \[\nabla_h  ES^\alpha(h^\prime(\Delta S-\Delta H))=-\frac{1}{\alpha}E\left[\Delta S_i1_{\left\lbrace h^\prime \Delta S -\Delta H\leq F_{ \{h^\prime \Delta S-\Delta H\}}^{-1}(\alpha)\right\rbrace}\right],i=1,\dots,n.\] Thus, under \Cref{prp:solution}, the optimality condition turns to the solution of the following system of equations \[\begin{cases}E\left[\Delta S_i 1_{\left\lbrace h^\prime \Delta S -\Delta H\leq F_{ \{h^\prime \Delta S-\Delta H\}}^{-1}(\alpha)\right\rbrace}\right]=-\alpha\lambda,\:\forall\:i=1,\dots,n\\
h^\prime\mathbf{1}=1\end{cases}.\]  Notice that the optimality condition is not altered if we consider the affine utility function $u(x)=a+bx$, where $a,b>0$, due to Translation Invariance and the Positive Homogeneity of $ES^\alpha$.
\end{Exm}

\begin{Exm}
For this example, consider again $\rho=ES^\alpha$ and $u(x)=-e^{-a x}+1,\:a>0$, under the  case where $\Delta S-\Delta H$ follows a multivariate normal distribution with parameters $\mu=E[\Delta S-\Delta H]$ and $\Sigma=\Sigma (\Delta S-\Delta H)$. Let $\Phi$ and $\phi$ be, respectively, the standard Normal cumulative and density probability distributions. The Expected shortfall for such a random variable with this distribution of probability is well known to be \[ES^\alpha(h^\prime \Delta S-\Delta H)=\varphi(h)=-\exp\left(-a h' \mu + \frac{a^2 h' \Sigma h}{2}\right) \frac{\Phi\left(\Phi^{-1}(\alpha) - a \sqrt{h' \Sigma h}\right)}{\alpha}.\] Thus, the  problem $P(H)$ becomes 
\[
\inf\limits_{\{h\in\mathbb{R}^n\colon h^\prime\mathbf{1}=1\}}\varphi(h)=\inf\limits_{\{h\in\mathbb{R}^n\colon h^\prime\mathbf{1}=1\}}\left\lbrace-\exp\left(-a h' \mu + \frac{a^2 h' \Sigma h}{2}\right) \frac{\Phi\left(\Phi^{-1}(\alpha) - a \sqrt{h' \Sigma h}\right)}{\alpha}\right\rbrace
\]
We obtain the following expression using multiple chain rule and matrix calculus steps.
\[
\nabla_h  \varphi(h) = \frac{\mathcal{E}(h)}{\alpha} \left[ \left( a \mu - a^2 \Sigma h \right) \Phi\left(\Phi^{-1}(\alpha) - a \sqrt{h' \Sigma h}\right) + a \phi\left(\Phi^{-1}(\alpha) - a \sqrt{h' \Sigma h}\right) \frac{\Sigma h}{\sqrt{h' \Sigma h}} \right],\] where \[
\mathcal{E}(h) = \exp\left(-a h' \mu + \frac{a^2 h' \Sigma h}{2}\right).
\] Thus, we obtain the solution to $P(h)$ by solving the system \[\begin{cases}\nabla_h  \varphi(h)=\lambda\mathbf{1}\\
h^\prime \mathbf{1}=1\end{cases}.\] Such a solution can be obtained through usual numerical solutions.

\end{Exm}

\begin{Exm}
A loss function that is very used for hedging optimization is $u(x)=-x^{-}=\min\{x,0\}$. This map is considered, for instance, in \cite{Nakano2004} and \cite{Rudloff2007}, where the focus is on the shortfall risk of the hedged position. In this case the optimization problem becomes $\inf_{V\in\mathcal{V}}\rho_u\left((V-H)^-\right)$. Since in this case $u^*(q)=\mathbb{I}_{[0,1]}(q)$, we have by \Cref{dual} that the penalty term becomes \[\alpha_{\rho_u}(\mathbb{Q})=\min\left\lbrace\alpha_\rho(Y)\colon Y(\mathbb{Q}\in[0,1])=1\right\rbrace.\] Furthermore, the sub-differential of $X\mapsto E_Y[-u(X)]$ is obtained as $\mathbb{Q}=Y1_{X\leq 0}$. Thus, in accordance to \Cref{crl:sub}, we have that \[\partial \rho_u(X)=\{Y1_{X\leq 0}\colon Y\in\partial\rho(-X^-)\}.\] In particular, if $\rho$
 is Gateaux differentiable at $u(X)$ with derivative $\mathbb{Q}$,  by \Cref{prp:solution} $h$ is a solution for $P(H)$ if and only if it solves the systems of equations given as \[\begin{cases}E_\mathbb{Q}[\Delta S_i1_{h^\prime\Delta S\leq \Delta H}]=\lambda,\:\forall\:i=1,\dots,n\\
 h^\prime\mathbf{1}=1\end{cases}.\]
 \end{Exm}

\section*{Disclosure of interest}

 There are no interests to declare.

\bibliography{Theory,reference}
\bibliographystyle{apalike}

\end{document}